\documentclass{comsoc2025}
\usepackage{amsmath,amsfonts,amssymb}
\usepackage{amsthm}

\usepackage{hyperref}
\usepackage{cleveref}
\usepackage{mathtools}
\usepackage[noend,ruled]{algorithm2e}
\usepackage{placeins}
\usepackage{natbib}
\usepackage{tikz}

\tikzstyle{voter}=[circle, draw, fill=white, minimum size=7mm, inner sep=1pt]
\tikzstyle{winner}=[circle, draw, fill=white, minimum size=7mm, inner sep=1pt]
\tikzstyle{terminal}=[circle, draw, fill=white, minimum size=7mm, inner sep=1pt]
\tikzstyle{source side}=[circle, draw, fill=blue!18, minimum size=7mm, inner sep=1pt]
\tikzstyle{sink side}=[circle, draw, fill=black!12, minimum size=7mm, inner sep=1pt]
\tikzstyle{flow edge}=[->, >=stealth, thick]
\tikzstyle{cut edge}=[->, >=stealth, very thick, red!75!black]
\tikzstyle{edge label}=[fill=white, inner sep=1pt, font=\scriptsize]

\crefname{algocf}{alg.}{algs.}
\Crefname{algocf}{Algorithm}{Algorithms}
\newtheorem{theorem}{Theorem}
\newtheorem{remark}[theorem]{Remark}
\newtheorem{lemma}[theorem]{Lemma}

\newtheorem{proposition}[theorem]{Proposition}

\newtheorem{definition}{Definition}

\newcommand{\Oh}{\mathcal{O}}

\DeclarePairedDelimiter\floor{\lfloor}{\rfloor}
\newcommand{\ignore}[1]{}
\title{Faster Verification of PJR$^+$ via Mincuts}
\author{Drew Springham\footnote{King's College London, drew.springham@kcl.ac.uk}}
\hypersetup{
  pdftitle={Faster Verification of PJR+ via Mincuts},
  pdfauthor={Drew Springham}
}

\begin{document}
%%%%%%%%%%%%%%%%%%%%%%%%%%%%%%%%%%%%%%%%%%%%%%%%%%%%%%%%%%%%%%%%%%%%%%%%%

% Include a short abstract here (100-300 words):
\begin{abstract}
PJR$^+$ is a polynomial-time verifiable proportionality axiom for
approval-based committee elections, but its known polynomial-time verification
procedure relies on general submodular-function minimisation. We show that its
objective is a maximum-closure
problem and give a direct mincut formulation of the problem on a bipartite graph. Using
an almost-linear-time maximum-flow algorithm, this yields an
\(\Oh(m(nk)^{1+o(1)})\)-time verifier, where \(n\), \(m\), and \(k\) are the
numbers of voters, candidates, and committee members, respectively. The
dependence of this bound on each parameter separately is almost linear: it is
linear in \(m\), and almost linear in \(n\) and \(k\). The
verifier also returns an explicit group witnessing a violation and admits
a slower but immediately implementable variant based on the preflow--push mincut algorithm. Finally,
for the parameterised axiom \(\alpha\)-PJR$^+$, where \(\alpha\) is used as a multiplier
in the group size, we demonstrate how to compute the largest value of \(\alpha\) for
which a committee still fails the axiom using this mincut formulation.
\end{abstract}

%%%%%%%%%%%%%%%%%%%%%%%%%%%%%%%%%%%%%%%%%%%%%%%%%%%%%%%%%%%%%%%%%%%%%%%%%

\section{Introduction}
Approval-based multiwinner voting selects a committee from candidates using voters'
approval ballots. Proportionality axioms aim to ensure that sufficiently large groups
with shared preferences receive an appropriate number of representatives.
Proportional justified representation (PJR) requires, for every positive integer
\(\ell\), that any group comprising at least an \(\ell/k\) fraction of the voters
and sharing at least \(\ell\) approved candidates collectively approve at least
\(\ell\) committee members
\citep{DBLP:journals/ai/SanchezFernandezELGFBS26}. PJR$^+$
\citep{DBLP:conf/sigecom/Brill023} strengthens this requirement by anchoring a
group's claim to a single commonly approved
unelected candidate\footnote{PJR$^+$ is the approval-based specialisation of inclusion
proportionality for solid coalitions (IPSC) axiom, introduced by \citet{DBLP:conf/aaai/0001L21}. \citet{DBLP:conf/sigecom/Brill023} 
reformulated this condition for approval-based elections and named it
PJR$^+$.}. Although PJR$^+$ is stronger, it can be verified in polynomial
time, whereas deciding whether a given committee satisfies PJR is
coNP-complete \citep{DBLP:conf/aaai/0001EHLFS18}. This note gives a faster
and more direct way to perform PJR$^+$ verification.

The landscape of efficiently verifiable proportionality axioms has recently
expanded beyond PJR$^+$ and EJR$^+$. FJR$^+$ is a fractional strengthening of
FJR and EJR$^+$ \citep{teh2026strengtheningjustifiedrepresentationefficient},
while core$^+$ is a fractional strengthening of the core and FJR$^+$
\citep{becker2026core}. Both axioms are polynomial-time verifiable, and
committees satisfying either axiom can also be computed in polynomial time.
These results reinforce the usefulness of efficient
verification even for demanding proportionality requirements. Our focus is on
the running time of verification: we replace the general
submodular-minimisation procedure for PJR$^+$ with an explicit mincut
formulation.

Scaling the minimum size of groups entitled to representation is a common way
to obtain approximate or quantitative variants of proportionality axioms.
To compare the different conventions, let \(\alpha\) denote a multiplier on
the usual group-size threshold: a group claiming \(\ell\) representatives must
contain at least \(\alpha\ell n/k\) voters.
In the incomplete-vote setting, \citet{halpern2026representation} define
\(\alpha\)-EJR using the reciprocal convention: their parameter divides,
rather than multiplies, the usual group-size threshold. For ordinal elections,
\citet{DBLP:conf/aaai/BardalBM025} apply group-size scaling to EJR$^+$ and
other axioms, including multipliers \(\alpha<1\). A fixed factor-two
relaxation of EJR$^+$ is also used by \citet{jianggoel2025approximation} when
optimising committee cost.

For PJR$^+$, the present group-size scaling with \(\alpha\geq1\) appeared in
\citet{springham26a}. This paper extends the definition to every \(\alpha>0\),
including \(\alpha<1\), and computes the exact boundary value for a given
committee. This threshold provides a quantitative comparison complementary to
empirical comparisons based on the prevalence, satisfaction, and coverage of
JR, PJR, and EJR committees \citep{DBLP:conf/ijcai/BredereckFKN19}.
Relatedly, \citet{he2026checkpleaseverifiablyfair} study ($\alpha=1$) PJR\(^{+}\)
verification in proportional clustering, introducing a metric analogue of
PJR\(^{+}\) and corresponding verification algorithms.

\citet{DBLP:conf/sigecom/Brill023} verify PJR$^+$ by solving one
submodular-minimisation problem for each unelected candidate. Their objective is the
number of committee members approved by at least one voter in a chosen group, minus
a linear reward for the size of that group. The key structure is that including
a voter gives a fixed reward, whereas each committee member approved by at
least one included voter incurs a cost only once, regardless of how many
included voters approve them. This is a classical maximum-closure problem
\citep{158279a7-9414-3656-aa58-e0056bf82d76}, equivalently a selection problem
with shared coverage costs
\citep{4528f944-06bd-31a8-80a9-a910bab75944,DBLP:journals/mansci/Hochbaum04}.
Both formulations admit a direct mincut representation in a bipartite
graph, which we exploit to obtain a direct verifier.

An exact mincut in a directed graph with \(b\) edges and
polynomially bounded integral capacities can be computed in
\(b^{1+o(1)}\) time
\citep{DBLP:conf/focs/Brand0PKLGSS23,DBLP:journals/jacm/ChenKLPGS25}.
This gives an almost-linear-time algorithm for PJR$^+$ verification, with
running time $\Oh\bigl(m(nk)^{1+o(1)}\bigr)$, where $n$ is the number of
voters, $m$ the number of candidates, and $k$ the committee size. This is
``almost-linear'' in each variable separately. The verifier also returns an
explicit group witnessing a violation. For comparison, applying the expected
weakly polynomial running-time bound of
\citet{DBLP:conf/focs/LeeSW15} to the scaled integer-valued objective gives a
total running time of
\(\Oh\bigl(mn^3k\log^{\Oh(1)}(nk)\bigr)\): the ground set has at most \(n\)
voters, a value-oracle call takes \(\Oh(nk)\) time, and the magnitude of the
objective is \(\Oh(nk)\). The direct mincut formulation therefore removes
essentially a quadratic factor in \(n\), up to subpolynomial and polylogarithmic
factors.

The almost-linear-time maximum-flow algorithms underlying this bound are currently
mostly theoretical; an overview of these results notes that they ``remain
impractical'' \citep{Herkle_2023}. We
therefore also analyse a specialised bipartite highest-label preflow--push
algorithm \citep{DBLP:journals/siamcomp/AhujaOST94} and a generic
highest-label preflow--push algorithm
\citep{DBLP:journals/jacm/GoldbergT88,DBLP:journals/siamcomp/CheriyanM89}. The latter is available in NetworkX \citep{hagberg2008}, providing
an off-the-shelf Python implementation of the mincut component\footnote{See \url{https://networkx.org/documentation/networkx-3.6.1/reference/algorithms/generated/networkx.algorithms.flow.preflow_push.html}}. This alternative has weaker worst-case bounds but makes the
reduction directly usable without implementing the recent almost-linear algorithms.

We additionally study $\alpha$-PJR$^+$, in which the size required of a potentially
underrepresented group is scaled by a parameter $\alpha>0$. This leads to the
\emph{PJR$^+$ threshold}: the largest value of $\alpha$ at which the committee still
fails the axiom. A smaller threshold therefore gives a stronger guarantee. We show the
threshold can be computed exactly by binary search over a finite set of possible
values, giving an immediately implementable method using ordinary mincut
routines.
For completeness, Appendix~\ref{app:parametric} records a specialised
parametric bipartite-flow formulation, whose worst-case running time is
effectively the same as that of binary search.

Our contributions are as follows:
\begin{itemize}
    \item We recognise the maximum-closure structure of the Brill--Peters
    verification objective and obtain a direct mincut verifier for PJR$^+$ that
    returns an explicit violation witness.
    \item We give three flow-based ways to instantiate the verifier: one using an
    almost-linear-time flow algorithm, one using specialised bipartite highest-label
    preflow--push, and one using the generic highest-label preflow--push algorithm
    available in NetworkX.
    \item We show how to compute the exact PJR$^+$ threshold, including the regime
    below $1$, by binary search using the above mincut routines; an appendix
    records a formulation using parametric mincuts.
\end{itemize}

\section{Preliminaries}
For a positive integer $d$, let $[d]=\{1,\ldots,d\}$. 
\begin{definition}[Approval-based election]
    An approval-based election
consists of a set $N=[n]$ of voters, a set $C$ of $m$ candidates, and an approval
ballot $A_i\subseteq C$ for every voter $i\in N$. A \emph{committee} is a set
$W\subseteq C$ of size $k$, where $1<k< m$. For each candidate $c\in C$,
write $N_c=\{i\in N:c\in A_i\}$ for the set of voters who approve $c$.
\end{definition} 
\begin{definition}[$\alpha$-PJR$^+$]
For a
group $N'\subseteq N$, write
\[
    \Gamma_W(N')=W\cap\bigcup_{i\in N'}A_i\qquad\text{and} \qquad r_W(N')=|\Gamma_W(N')|
\]
so \(\Gamma_W(N')\) is the set of committee members approved by at least one
voter in \(N'\), and \(r_W(N')\) is its
\emph{representation} in $W$.
Fix $\alpha>0$ and a committee $W$. A \emph{$\alpha$-PJR$^+$ violation} is a triple
$(c,N',\ell)$ such that $c\in C\setminus W$, $\emptyset\neq N'\subseteq N_c$, and
$\ell\in[k]$, with
\(
    |N'|\geq \alpha\ell n/k
    \text{ and }
    r_W(N')<\ell.
\)
The committee $W$ satisfies \emph{$\alpha$-PJR$^+$} if no such triple exists.
\end{definition}

The case $\alpha=1$ is ordinary PJR$^+$
\citep{DBLP:conf/sigecom/Brill023}. The definition for \(\alpha\geq1\) appeared
in our earlier work \citep{springham26a}; here we extend it to \(0<\alpha<1\).
Increasing \(\alpha\) raises the minimum size of a violating group and therefore
weakens the requirement.

\begin{definition}[Directed cuts]
    Let $G=(V,E,u)$ be a directed graph with nonnegative edge capacities
$u:E\to\mathbb{R}_{\geq0}$ and distinct vertices $s,z\in V$. For a set
$S\subseteq V$ with $s\in S$ and $z\notin S$, the capacity of the corresponding
$s$--$z$ cut is
\[
    \operatorname{cap}_G(S)
    =
    \sum_{\substack{(x,y)\in E\\x\in S,\ y\notin S}}u(x,y).
\]
A \emph{minimum $s$--$z$ cut} is a set $S$ of minimum capacity among
all such sets.
\end{definition}
Only edges directed from $S$ to $V\setminus S$ are counted; edges directed into
$S$ are not. 

\section{The mincut formulation}
\label{sec:mincut}

This section turns the search for an \(\alpha\)-PJR$^+$ violation into a
mincut problem. For each unelected candidate \(c\), we build a directed
graph from voters who approve \(c\) and the committee members in \(W\).
Certain cuts in this graph encode voter groups: an excluded voter contributes
to the cut through its source edge, while a committee member approved by the
selected group contributes through its sink edge. We first derive the required
edge capacities from the violation condition, then define the graph and
prove the correspondence between its mincuts and voter groups.

Fix an unelected candidate \(c\). A group \(N'\subseteq N_c\) can claim any
representation level \(\ell\in[k]\) satisfying
\(\ell\leq k|N'|/(\alpha n)\); its largest such claim is
\[
    \ell^*(N')=\min\left\{k,\floor*{\frac{k|N'|}{\alpha n}}\right\}.
\]
It witnesses a violation of $\alpha$-PJR$^+$ exactly when
\(r_W(N')<\ell^*(N')\). For a group with \(r_W(N')<k\), this is equivalent to
\[
    r_W(N')+1\leq \frac{k}{\alpha n}|N'|.
\]
We therefore need to compare the number of voters in \(N'\) with the number of
committee members approved by at least one of them.

The verification algorithms in the next section consider values of $\alpha$ represented as
\(
    \alpha=jk/(nh),
\)
where \(j\) and \(h\) are positive integers. They represent a possible group
size and representation level, respectively: at the breakpoint induced by a
group \(N'\) and level \(\ell\), we have \(j=|N'|\) and \(h=\ell\). For such a value,
\(h/j=k/(\alpha n)\), and the preceding inequality is equivalent to
\[
    j\,r_W(N')-h|N'|\leq-j.
\]
Let
\(U\) denote the fixed set of voter vertices included in the graph. In the next section, we take \(U=N_c\)
when \(\alpha\geq1\), and \(U=N_c\setminus N_w\) for a chosen \(w\in W\) when
\(0<\alpha<1\). A cut
will select a group \(N'\subseteq U\).

\begin{definition}[Cut graph]
\label{def:cut-network}
For a voter set \(U\subseteq N\) and positive integers \(j\) and \(h\), define
\(G(U;j,h)\) to have vertex set \(\{s,z\}\cup U\cup W\), where \(s\) is the
source and \(z\) is the sink, and the following edges:
\begin{align*}
    &(s,i) &&\text{of capacity \(h\), for every \(i\in U\)},\\
    &(i,w) &&\text{of capacity \(h\), for \(i\in U\) and
        \(w\in A_i\cap W\)},\\
    &(w,z) &&\text{of capacity \(j\), for every \(w\in W\)}.
\end{align*}
\end{definition}
\Cref{fig:cut-network} shows an example with three voters and three committee
members. The voter--winner edges encode approvals. The underlying undirected
graph is bipartite, with parts \(\{s\}\cup W\) and \(U\cup\{z\}\).

If \(N'\) and all winners in \(\Gamma_W(N')\) are placed on the source side,
the source edges of the voters in \(U\setminus N'\) contribute
\(h(|U|-|N'|)\) in total to the value of the cut, and the
sink edges of the committee members in \(\Gamma_W(N')\) contribute
\(j\,r_W(N')\) in total. This can be seen from \Cref{fig:canonical-cut}.
Additionally, unlike in the standard maximum-closure reduction, the voter--winner edges have
finite capacity \(h\). This is sufficient because, whenever a cut contains an
edge from a source-side voter to a sink-side winner, moving that voter to the
sink side adds the capacity-\(h\) source edge but removes at least one
capacity-\(h\) voter--winner edge. The cut capacity therefore does not
increase. The canonical-cut lemma below formalises both of these observations.

\begin{figure}[t]
    \centering
    \begin{tikzpicture}
  \node[style=terminal] (s) at (0,0) {$s$};
  \node[style=voter] (i1) at (2,2) {$1$};
  \node[style=voter] (i2) at (2,0) {$2$};
  \node[style=voter] (i3) at (2,-2) {$3$};
  \node[style=winner] (w1) at (5,2) {$w_1$};
  \node[style=winner] (w2) at (5,0) {$w_2$};
  \node[style=winner] (w3) at (5,-2) {$w_3$};
  \node[style=terminal] (z) at (7,0) {$z$};

  \draw[style=flow edge] (s) to node[style=edge label,above] {$h$} (i1);
  \draw[style=flow edge] (s) to node[style=edge label,above] {$h$} (i2);
  \draw[style=flow edge] (s) to node[style=edge label,below] {$h$} (i3);
  \draw[style=flow edge] (i1) to node[style=edge label,above] {$h$} (w1);
  \draw[style=flow edge] (i1) to node[style=edge label,above] {$h$} (w2);
  \draw[style=flow edge] (i2) to node[style=edge label,above] {$h$} (w2);
  \draw[style=flow edge] (i3) to node[style=edge label,below] {$h$} (w2);
  \draw[style=flow edge] (i3) to node[style=edge label,below] {$h$} (w3);
  \draw[style=flow edge] (w1) to node[style=edge label,above] {$j$} (z);
  \draw[style=flow edge] (w2) to node[style=edge label,above] {$j$} (z);
  \draw[style=flow edge] (w3) to node[style=edge label,below] {$j$} (z);
\end{tikzpicture}
    \caption{An example of \(G(U;j,h)\). Here
    \(U=\{1,2,3\}\) and \(W=\{w_1,w_2,w_3\}\).}
    \label{fig:cut-network}
\end{figure}

\begin{figure}[t]
    \centering
    \begin{tikzpicture}
  \node[style=source side] (s) at (0,0) {$s$};
  \node[style=source side] (i1) at (2,2) {$1$};
  \node[style=source side] (i2) at (2,0) {$2$};
  \node[style=sink side] (i3) at (2,-2) {$3$};
  \node[style=source side] (w1) at (5,2) {$w_1$};
  \node[style=source side] (w2) at (5,0) {$w_2$};
  \node[style=sink side] (w3) at (5,-2) {$w_3$};
  \node[style=sink side] (z) at (7,0) {$z$};

  \draw[style=flow edge] (s) to (i1);
  \draw[style=flow edge] (s) to (i2);
  \draw[style=cut edge] (s) to node[style=edge label,below] {$h$} (i3);
  \draw[style=flow edge] (i1) to (w1);
  \draw[style=flow edge] (i1) to (w2);
  \draw[style=flow edge] (i2) to (w2);
  \draw[style=flow edge] (i3) to (w2);
  \draw[style=flow edge] (i3) to (w3);
  \draw[style=cut edge] (w1) to node[style=edge label,above] {$j$} (z);
  \draw[style=cut edge] (w2) to node[style=edge label,above] {$j$} (z);
  \draw[style=flow edge] (w3) to (z);
\end{tikzpicture}
    \caption{The canonical cut for \(N'=\{1,2\}\) in the example from
    \Cref{fig:cut-network}. Its value is
    \(h|U\setminus N'|+j|\Gamma_W(N')|=h+2j\).}
    \label{fig:canonical-cut}
\end{figure}

\begin{lemma}[Canonical cuts]
\label{lem:canonical}
Every \(s\)--\(z\) cut in \(G(U;j,h)\) can be transformed, without increasing
its capacity, into one with source side
\(
 S_{N'}=\{s\}\cup N'\cup\Gamma_W(N')
\)
for some \(N'\subseteq U\).
\end{lemma}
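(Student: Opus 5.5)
Take an arbitrary cut with source side \(S\) (so \(s\in S\), \(z\notin S\)). Put \(N'=S\cap U\) and \(T=S\cap W\). We first modify the voters, then the winners, checking each time that \(\operatorname{cap}(S)\) does not increase. Since the graph is bipartite with only the edge types \((s,i)\), \((i,w)\), \((w,z)\), the capacity splits as
\[
\operatorname{cap}(S)=h\,|U\setminus N'| + h\,\bigl|\{(i,w):i\in N',\,w\in A_i\cap W\setminus T\}\bigr| + j\,|T|.
\]

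\textbf{Step 1 (voters).} Pick a voter \(i\in N'\) that has an edge to some winner \(w\in A_i\cap W\) with \(w\notin T\). Move \(i\) to the sink side. Three things change:
\begin{itemize}
\item the source edge \((s,i)\) now crosses the cut, adding \(h\);
\item every cut edge \((i,w)\) with \(w\notin T\) stops crossing, removing at least one edge and hence at least \(h\);
\item edges \((i,w)\) with \(w\in T\) now go sink-to-source, so they are not counted.
\end{itemize}
No other edge at \(i\) is affected, so the capacity does not increase. Repeat this until it no longer applies; it terminates because \(N'\) strictly shrinks. Afterwards every remaining voter in \(N'\) satisfies \(A_i\cap W\subseteq T\), that is, \(\Gamma_W(N')\subseteq T\).

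\textbf{Step 2 (winners).} Now move every winner in \(T\setminus\Gamma_W(N')\) to the sink side. Such a winner \(w\) has no in-edge from any voter of \(N'\), because otherwise \(w\) would lie in \(\Gamma_W(N')\). So moving \(w\) creates no new crossing edge: its in-edges come from sink-side voters, and those are not counted either way. Meanwhile its edge \((w,z)\), of capacity \(j\ge 0\), leaves the cut. The capacity therefore does not increase. After this step \(T=\Gamma_W(N')\), and the source side is exactly \(S_{N'}\).

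The only delicate point is the order. Step 1 must come first, because moving voters can only shrink \(\Gamma_W(N')\); doing the voters first means the winners are trimmed only once. We also have to check in Step 1 that edges \((i,w)\) with \(w\in T\) cause no penalty, and this holds because only edges from \(S\) to \(V\setminus S\) are counted.

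As a byproduct, the final capacity is \(h|U\setminus N'|+j\,r_W(N')\).
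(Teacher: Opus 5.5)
Your proof is correct and follows the same two steps as the paper. First, every source-side voter that approves a sink-side winner is moved to the sink side, which trades the capacity-\(h\) edge \((s,i)\) for at least one capacity-\(h\) voter--winner edge. Second, the source-side winners outside \(\Gamma_W(N')\) are moved to the sink side, which drops their sink edges and creates no new crossing edges. Your explicit capacity decomposition and one-voter-at-a-time loop add detail but not a new idea: \(T\) is fixed throughout Step 1, so moving the voters sequentially or all at once gives the same result.
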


\begin{proof}
Start from an arbitrary source side \(S\). Move to the sink side every voter
\(i\in U\cap S\) that approves some winner outside \(S\). Moving \(i\) adds
the capacity-\(h\) edge \((s,i)\) to the cut but removes at least one
capacity-\(h\) voter--winner edge. Every other affected voter--winner edge
either ceases to cross the cut or points into the new source side. The cut
capacity therefore does not increase.

Let \(N'\) be the voters that remain on the source side. Every winner in
\(\Gamma_W(N')\) is now on the source side. Move every other source-side winner
to the sink side. This removes its capacity-\(j\) sink edge and creates no
crossing voter--winner edge. The resulting source side is exactly \(S_{N'}\), and
its capacity is no larger than that of the original cut.
\end{proof}

\Cref{fig:canonical-cut} illustrates the resulting canonical cut. Blue
vertices lie on its source side and grey vertices on its sink side;
the red edges are exactly those counted in its capacity.

\begin{proposition}[Cut-value characterisation]
\label{prop:cut-value}
Writing \(M(U;j,h)\) for the mincut value,
\[
 M(U;j,h)=\min_{N'\subseteq U}\bigl(h(|U|-|N'|)+j\,r_W(N')\bigr).
\]
Consequently,
\[
 M(U;j,h)\leq h|U|-j
 \quad\Longleftrightarrow\quad
 \exists N'\subseteq U:\quad r_W(N')-\frac{h}{j}|N'|\leq-1.
\]
\end{proposition}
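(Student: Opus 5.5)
The plan is to prove the equality by two inequalities and then read off the equivalence by rearranging terms.

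\emph{Lower bound on the minimum.} Take an arbitrary \(s\)--\(z\) cut with source side \(S\). By \Cref{lem:canonical} there is an \(N'\subseteq U\) with \(\operatorname{cap}(S_{N'})\leq\operatorname{cap}(S)\). It therefore suffices to compute the capacity of a canonical cut exactly. With source side \(S_{N'}=\{s\}\cup N'\cup\Gamma_W(N')\), the crossing edges are of three kinds:
\begin{itemize}
\item Source edges \((s,i)\) with \(i\in U\setminus N'\), contributing \(h(|U|-|N'|)\).
\item Voter--winner edges \((i,w)\) with \(i\in N'\) and \(w\notin S_{N'}\). There are none, because every winner approved by a voter in \(N'\) lies in \(\Gamma_W(N')\subseteq S_{N'}\).
\item Sink edges \((w,z)\) with \(w\in\Gamma_W(N')\), contributing \(j\,r_W(N')\).
\end{itemize}
Edges pointing into \(S_{N'}\) are not counted by definition. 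Hence \(\operatorname{cap}(S_{N'})=h(|U|-|N'|)+j\,r_W(N')\), and so \(M(U;j,h)\geq\min_{N'}\bigl(h(|U|-|N'|)+j\,r_W(N')\bigr)\).

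\emph{Upper bound on the minimum.} Every canonical source side \(S_{N'}\) is a valid cut: it contains \(s\) and excludes \(z\). So \(M(U;j,h)\) is at most each of these values, and the formula for \(M(U;j,h)\) follows.

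\emph{The equivalence.} Since the minimum is attained, \(M(U;j,h)\leq h|U|-j\) holds if and only if some \(N'\subseteq U\) satisfies \(h|U|-h|N'|+j\,r_W(N')\leq h|U|-j\). Cancelling \(h|U|\) gives \(j\,r_W(N')-h|N'|\leq -j\). Dividing by the positive integer \(j\) gives \(r_W(N')-\frac{h}{j}|N'|\leq -1\).

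The only step needing care is the capacity computation for the canonical cut, in particular checking that no voter--winner edge crosses it. The remaining steps are bookkeeping, with \Cref{lem:canonical} doing the substantive work.
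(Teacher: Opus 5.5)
Your proof is correct and follows the paper's argument: \Cref{lem:canonical} reduces the minimum to canonical cuts, whose crossing edges are exactly the source edges of \(U\setminus N'\) and the sink edges of \(\Gamma_W(N')\), and the equivalence then follows by cancelling \(h|U|\) and dividing by \(j>0\). The paper compresses this into a few lines, and your version differs only in being more explicit about the two inequalities and about why no voter--winner edge crosses a canonical cut.
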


\begin{proof}
By \Cref{lem:canonical}, some mincut has source side \(S_{N'}\). Its
crossing edges are precisely the source edges for \(U\setminus N'\) and the
sink edges for \(\Gamma_W(N')\), giving the formula. The equivalence follows by
subtracting \(h|U|\) and dividing by \(j>0\).
\end{proof}

A returned minimum \(s\)--\(z\) cut need not already have canonical form
because the capacity-\(h\) middle edges can create ties. Given its source side
\(S\), form
\[
    N'=\{i\in U\cap S:A_i\cap W\subseteq S\}
\]
by scanning the voter--winner edges, and then form \(\Gamma_W(N')\) from the
edges incident to \(N'\). Replacing \(S\) by
\(\{s\}\cup N'\cup\Gamma_W(N')\) gives another minimum \(s\)--\(z\) cut,
now in canonical form. If \(e_U\) is
the number of voter--winner edges, this takes \(\Oh(|U|+k+e_U)\) time. Thus
recovering \(N'\) adds only linear overhead to the flow computation.

We have therefore reduced the coverage-minus-linear objective to one minimum
cut and shown how to recover a group attaining its minimum. The next section
chooses \(U\), \(j\), and \(h\) so that this group is exactly a
\(\alpha\)-PJR$^+$ violation witness.

\section{Verifying \texorpdfstring{\(\alpha\)-PJR$^+$}{alpha-PJR+}}
\label{sec:verification}

We now use the cut formulation to search for a violating triple
\((c,N',\ell)\). For each unelected candidate \(c\), the mincut selects a
group from voters who approve \(c\). When \(\alpha\geq1\), we may use all of
\(N_c\). When \(0<\alpha<1\), we must additionally ensure that some committee
member is approved by no voter in the selected group. This is the reason for
the two algorithms below.

The algorithms first replace an arbitrary query \(\alpha>0\) by a value of the form
\(jk/(nh)\), with \(j\in[n]\) and \(h\in[k]\). This does not change the answer.
Indeed, fix \(c\in C\setminus W\), a nonempty \(N'\subseteq N_c\), and
\(\ell\in[k]\) with \(r_W(N')<\ell\). The triple \((c,N',\ell)\) is a
violation exactly when
\(
 \alpha\leq k|N'|/n\ell,
\)
and every value on the right belongs to
\[
 \mathcal R(n,k)=
 \left\{\frac{jk}{nh}:j\in[n],\ h\in[k]\right\}.
\]
Let \(\alpha'\) be the smallest member of
\(\mathcal R(n,k)\) with \(\alpha'\geq \alpha\). The committee fails
\(\alpha\)-PJR$^+$ if and only if it fails \(\alpha'\)-PJR$^+$. If no such
value exists, it satisfies \(\alpha\)-PJR$^+$. Otherwise, we replace \(\alpha\)
by \(\alpha'\) and continue to denote the rounded value by \(\alpha\). Thus,
for the remainder of this section, \(\alpha=jk/(nh)\). If a query below \(1\)
is rounded to \(1\), we use the \(\alpha\geq1\) algorithm.

\subsection{The case \texorpdfstring{\(\alpha\geq1\)}{alpha >= 1}}

Here we take \(U=N_c\). If the cut condition from
\Cref{prop:cut-value} holds, then
\(r_W(N')+1\leq h|N'|/j\). The assumption \(\alpha\geq1\) ensures that
\(h|N'|/j\leq k\), so the representation level
\(\ell=\floor*{h|N'|/j}\) lies in \([k]\).

\begin{algorithm}[t]
\caption{Find an \(\alpha\)-PJR$^+$ violation for \(\alpha\geq1\)}
\label{alg:large-t}
\KwIn{Election, committee \(W\), and \(\alpha=jk/(nh)\geq1\).}
\KwOut{A violating triple, or \(\textsf{None}\).}
\ForEach{\(c\in C\setminus W\)}{
 \(U\gets N_c\)\;
 \lIf{\(j>h|U|\)}{\textbf{continue}}
 Compute a mincut of \(G(U;j,h)\), with value \(M\) and source side \(S\)\;
 \If{\(M\leq h|U|-j\)}{
  Canonicalise \(S\) to obtain \(N'\)\;
  \(\ell\gets\floor*{h|N'|/j}\)\;
  \Return{\((c,N',\ell)\)}\;
 }
}
\Return{\(\textsf{None}\)}\;
\end{algorithm}

\begin{theorem}[Correctness for \(\alpha\geq1\)]
\label{thm:large-correct}
\Cref{alg:large-t} returns an \(\alpha\)-PJR$^+$ violation if one exists, and returns
\(\textsf{None}\) otherwise.
\end{theorem}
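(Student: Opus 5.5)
We prove soundness and completeness separately, taking $\alpha=jk/(nh)\geq1$ after the rounding step, so that $h/j=k/(\alpha n)\leq k/n$.

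\emph{Soundness.} Suppose the algorithm returns $(c,N',\ell)$ for some $c\in C\setminus W$.
\begin{itemize}
\item The canonicalised cut is still a mincut, so by \Cref{prop:cut-value} the recovered $N'\subseteq N_c$ satisfies $r_W(N')+1\leq h|N'|/j$.
\item Since $r_W(N')\geq0$, this gives $h|N'|\geq j\geq1$, so $N'\neq\emptyset$.
\item Set $\ell=\floor*{h|N'|/j}$. Because $r_W(N')+1$ is an integer bounded by $h|N'|/j$, we get $r_W(N')+1\leq\ell$, i.e.\ $r_W(N')<\ell$, and $\ell\geq1$.
\item Since $|N'|\leq n$, we have $h|N'|/j\leq hn/j=k/\alpha\leq k$, so $\ell\leq k$ and hence $\ell\in[k]$.
\item Finally, $\ell\leq h|N'|/j$ rearranges to $|N'|\geq j\ell/h=\alpha\ell n/k$.
\end{itemize}
So the returned triple is an $\alpha$-PJR$^+$ violation.

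\emph{Completeness.} Suppose a violation $(c,N',\ell)$ exists, so $|N'|\geq\alpha\ell n/k=j\ell/h$ and $r_W(N')\leq\ell-1$.
\begin{itemize}
\item Then $r_W(N')+1\leq\ell\leq h|N'|/j$, which is exactly the right-hand side of the equivalence in \Cref{prop:cut-value} with $U=N_c$. Hence $M(N_c;j,h)\leq h|N_c|-j$.
\item The skip test does not discard $c$: $h|N_c|\geq h|N'|\geq j\ell\geq j$.
\item So when the loop reaches $c$, assuming no earlier candidate already returned, the mincut test succeeds. Any earlier return is itself a violation by soundness.
\item Therefore the algorithm returns some triple, which is a violation by soundness.
\end{itemize}
Contrapositively, if no violation exists, no test succeeds and the algorithm returns \textsf{None}. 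The skipped candidates are harmless: if $j>h|N_c|$, the right-hand side $h|N_c|-j$ is negative while $M\geq0$, so the test would fail anyway.

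\emph{Rounding and the role of $\alpha\geq1$.} The reduction of an arbitrary query $\alpha$ to a value in $\mathcal R(n,k)$ was justified just before the algorithm, so the theorem's input form loses nothing. The only delicate step is guaranteeing $\ell\leq k$. This is where $\alpha\geq1$ enters, through the bound $h|N'|/j\leq hn/j=k/\alpha\leq k$. For $\alpha<1$ this bound fails, and a violation additionally needs $r_W(N')<k$, i.e.\ some committee member approved by no voter in $N'$; this is why that case is handled by a separate algorithm.
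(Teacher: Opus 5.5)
Your proof is correct and follows essentially the same route as the paper's. Soundness uses the cut-value proposition, integrality of \(r_W(N')+1\), and the bound \(h|N'|/j\leq k/\alpha\leq k\); completeness shows that the violating group's canonical cut passes the test and that its graph is not skipped. Your extra remarks on earlier returns in the loop and on the harmlessness of the skip test are valid refinements that the paper leaves implicit.
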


\begin{proof}
If the cut test succeeds, \Cref{prop:cut-value} gives
\(
 r_W(N')+1\leq h|N'|/j.
\)
The left side is an integer, so
\(r_W(N')+1\leq\floor*{h|N'|/j}=\ell\). Thus \(\ell\geq1\) and
\(N'\neq\emptyset\). Moreover,
\(
 h|N'|/j=k|N'|/(\alpha n)\leq k,
\)
using \(\alpha\geq1\) and \(|N'|\leq n\). Hence \(\ell\in[k]\),
\(r_W(N')<\ell\). Moreover,
\(\ell\leq h|N'|/j\) is equivalent to
\(|N'|\geq \alpha\ell n/k\). Since \(N'\subseteq U=N_c\), the returned triple is
an \(\alpha\)-PJR$^+$ violation.

Conversely, a violation \((c,N',\ell)\) satisfies
\(h|N'|/j\geq\ell\) and \(r_W(N')\leq\ell-1\). Therefore its graph has a canonical cut with $N'$, and the canonical cut associated with $N'$ passes the cut test by \Cref{prop:cut-value}. The graph is not skipped because
\(j\leq h|N'|\leq h|U|\).
\end{proof}

\subsection{The case \texorpdfstring{\(0<\alpha<1\)}{0 < alpha < 1}}

When \(\alpha<1\), \(\floor*{k|N'|/(\alpha n)}\) may exceed \(k\), so it need not be a
valid representation level. For example, let
\(\alpha=1/2\), \(C=W\cup\{c\}\), and suppose every voter approves every candidate.
Then every nonempty \(N'\subseteq N_c=N\) has \(r_W(N')=k\), so it cannot be a
violation. Nevertheless, the unrestricted objective at \(N'=N\) is
\[
 r_W(N)-\frac{k}{\alpha n}|N|=k-2k=-k\leq-1.
\]
Thus the unrestricted cut test would give a false positive. For every
\(w\in W\), however, \(N_c\setminus N_w=\emptyset\), so the restricted
construction below correctly finds no group.

The restriction follows from the definition of a violation. If
\((c,N',\ell)\) is a violation, then \(r_W(N')<\ell\leq k\), so some
\(w\in W\) is approved by no voter in \(N'\). The algorithm therefore tries
each \(w\in W\) and uses \(U=N_c\setminus N_w\). Conversely, every group
selected from this \(U\) satisfies \(r_W(N')\leq k-1\), allowing its
representation level to be capped at \(k\).

\begin{algorithm}[t]
\caption{Find an \(\alpha\)-PJR$^+$ violation for \(0<\alpha<1\)}
\label{alg:small-t}
\KwIn{Election, committee \(W\), and \(\alpha=jk/(nh)<1\).}
\KwOut{A violating triple, or \(\textsf{None}\).}
\ForEach{\(c\in C\setminus W\)}{
 \ForEach{\(w\in W\)}{
  \(U\gets N_c\setminus N_w\)\;
  \lIf{\(j>h|U|\)}{\textbf{continue}}
  Compute a mincut of \(G(U;j,h)\), with value \(M\) and source side \(S\)\;
  \If{\(M\leq h|U|-j\)}{
   Canonicalise \(S\) to obtain \(N'\)\;
   \(\ell\gets\min\{k,\floor*{h|N'|/j}\}\)\;
   \Return{\((c,N',\ell)\)}\;
  }
 }
}
\Return{\(\textsf{None}\)}\;
\end{algorithm}

\begin{theorem}[Correctness for \(0<\alpha<1\)]
\label{thm:small-correct}
\Cref{alg:small-t} returns an \(\alpha\)-PJR$^+$ violation if one exists, and returns
\(\textsf{None}\) otherwise.
\end{theorem}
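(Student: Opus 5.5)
The argument mirrors the proof of \Cref{thm:large-correct}, with two changes: the cap at \(k\) in the choice of \(\ell\), and the restriction of the voter set to \(U=N_c\setminus N_w\). We prove soundness (anything returned is a violation) and then completeness (if a violation exists, some iteration returns).

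\emph{Soundness.} Suppose the test \(M\leq h|U|-j\) succeeds for some pair \((c,w)\) and canonicalisation yields \(N'\subseteq U\).
\begin{itemize}
\item By \Cref{prop:cut-value}, \(r_W(N')+1\leq h|N'|/j\). Since the left side is an integer, \(r_W(N')+1\leq\floor*{h|N'|/j}\).
\item Because \(N'\subseteq N_c\setminus N_w\), no voter in \(N'\) approves \(w\). Hence \(r_W(N')\leq k-1\), that is, \(r_W(N')+1\leq k\).
\item Combining the two bounds gives \(r_W(N')<\min\{k,\floor*{h|N'|/j}\}=\ell\).
\item In particular \(\ell\geq1\), so \(\ell\in[k]\). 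Also \(h|N'|/j\geq1\), so \(N'\neq\emptyset\).
\item Since \(\ell\leq h|N'|/j=k|N'|/(\alpha n)\), we get \(|N'|\geq\alpha\ell n/k\).
\item Finally \(N'\subseteq N_c\) and \(c\notin W\).
\end{itemize}
So \((c,N',\ell)\) is an \(\alpha\)-PJR\(^+\) violation. The only new ingredient compared with the \(\alpha\geq1\) case is that the bound \(\ell\leq k\) now comes from \(w\) rather than from \(\alpha\geq1\). That is exactly why the cap at \(k\) loses nothing.

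\emph{Completeness.} Suppose a violation \((c,N',\ell)\) exists.
\begin{itemize}
\item Since \(r_W(N')<\ell\leq k\), some \(w\in W\setminus\Gamma_W(N')\) exists. No voter of \(N'\) approves \(w\), so \(N'\subseteq N_c\setminus N_w=U\) for that \(w\).
\item The violation gives \(|N'|\geq\alpha\ell n/k\), equivalently \(h|N'|/j\geq\ell\geq r_W(N')+1\).
\item Rearranging, \(h(|U|-|N'|)+j\,r_W(N')\leq h|U|-j\). So the canonical cut \(S_{N'}\) has value at most \(h|U|-j\).
\item By \Cref{prop:cut-value}, the mincut value satisfies \(M\leq h|U|-j\).
\item The pair is not skipped, because \(j\leq h|N'|\leq h|U|\).
\end{itemize}
Hence, by the time the loops reach this \((c,w)\), the algorithm has either already returned something or returns at this iteration. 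By soundness, whatever it returns is a valid violation.

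If no violation exists, soundness shows that no cut test can succeed, so the algorithm returns \(\textsf{None}\).

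The only delicate step is the integrality and cap argument in soundness: one must check that \(r_W(N')<\ell\) survives taking the minimum with \(k\). This follows from \(r_W(N')\leq k-1\) for every group drawn from \(U\).
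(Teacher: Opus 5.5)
Your proof is correct and follows essentially the same argument as the paper. Soundness goes through the cut-value characterisation, integrality, and the bound \(r_W(N')\leq k-1\) supplied by \(w\). Completeness picks some \(w\notin\Gamma_W(N')\), so that \(N'\subseteq N_c\setminus N_w\) passes the cut test without being skipped. Your explicit notes that \(N'\neq\emptyset\) and that any earlier return is still a valid violation by soundness are small points the paper leaves implicit.
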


\begin{proof}
If the cut test succeeds, \Cref{prop:cut-value} gives
\(r_W(N')+1\leq h|N'|/j\). Since the left side is an integer,
\(
 r_W(N')+1\leq\floor*{h|N'|/j}.
\)
Also \(N'\subseteq U=N_c\setminus N_w\), so \(r_W(N')\leq k-1\). If the
floor is at most \(k\), it equals \(\ell\); otherwise \(\ell=k\). In either
case \(r_W(N')<\ell\), and \(\ell\in[k]\). Finally,
\(\ell\leq h|N'|/j=k|N'|/(\alpha n)\), which is equivalent to
\(|N'|\geq \alpha\ell n/k\). Thus the returned triple is a violation.

Conversely, if \((c,N',\ell)\) is a violation, then some \(w\in W\) is approved
by no voter in \(N'\), because \(r_W(N')<\ell\leq k\). Hence
\(N'\subseteq N_c\setminus N_w\). The inequalities
\(h|N'|/j\geq\ell\) and \(r_W(N')\leq\ell-1\) make this graph pass the cut
test. It is not skipped because \(j\leq h|N'|\leq h|U|\).
\end{proof}

\subsection{Running time and implementation}

For one voter set \(U\), the graph has
\[
 a=|U|+k+2\quad\text{vertices},\qquad
 b=|U|+k+e_U\quad\text{edges},
\]
where \(e_U\) is the number of voter--winner edges:
\[
 e_U=\bigl|\{(i,w)\in U\times W:w\in A_i\}\bigr|.
\]
Thus
\(a=\Oh(n+k)\), \(b=\Oh(nk)\), and capacities are at most
\(\max\{n,k\}\). Construction, residual reachability, and canonicalisation take
\(\Oh(b)\) time.

\begin{proposition}[Verification running times]
\label{prop:verification-runtime}
For \(\alpha\geq1\), a violation can be found, or its nonexistence certified, in
\(\Oh(m(nk)^{1+o(1)})\) time using almost-linear maximum flow,
\(\Oh(mnk\min\{n,k\})\) time using specialised bipartite highest-label
preflow--push, or
\(\Oh(m(n+k)^2\sqrt{nk})\) time using generic highest-label preflow--push.
For \(0<\alpha<1\), each bound gains a factor of \(k\).
\end{proposition}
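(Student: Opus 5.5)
The plan is to combine the correctness results (\Cref{thm:large-correct} and \Cref{thm:small-correct}) with a per-graph cost bound, and then multiply by the number of graphs each algorithm builds. By those theorems, a returned triple is a violation and \textsf{None} certifies that none exists. So only the running time needs to be analysed.

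First, count the graphs.
\begin{itemize}
\item \Cref{alg:large-t} builds at most \(m-k\leq m\) graphs, one per unelected candidate.
\item \Cref{alg:small-t} builds at most \(mk\) graphs, one per pair \((c,w)\).
\end{itemize}
This accounts for the extra factor of \(k\) in the case \(0<\alpha<1\), provided the per-graph bound is uniform in \(U\).

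Before this, the preprocessing must be shown to be dominated. Rounding \(\alpha\) to the least \(\alpha'\in\mathcal R(n,k)\) with \(\alpha'\geq\alpha\) can be done in \(\Oh(nk)\) time by scanning all pairs \((j,h)\). Computing the sets \(N_c\), and the differences \(N_c\setminus N_w\), costs \(\Oh(nm)\) per pass, or \(\Oh(n)\) per graph using membership arrays. Both costs are absorbed in every stated bound.

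Next, bound one graph. As recorded above, \(G(U;j,h)\) has \(a=\Oh(n+k)\) vertices, \(b=\Oh(nk)\) edges (more precisely \(b\leq n+k+nk\)), and integer capacities at most \(\max\{n,k\}\), which are polynomially bounded. Construction, the test \(M\leq h|U|-j\), and canonicalisation take \(\Oh(b)\) time by the discussion after \Cref{prop:cut-value}. A minimum cut is obtained from a maximum flow by residual reachability from \(s\), also in \(\Oh(b)\). The three flow routines then give the following per-graph costs.
\begin{itemize}
\item The almost-linear algorithm \citep{DBLP:conf/focs/Brand0PKLGSS23,DBLP:journals/jacm/ChenKLPGS25} gives \(b^{1+o(1)}=(nk)^{1+o(1)}\).
\item The generic highest-label preflow--push algorithm runs in \(\Oh(a^2\sqrt{b})\) \citep{DBLP:journals/siamcomp/CheriyanM89}, which gives \(\Oh((n+k)^2\sqrt{nk})\).
\item The bipartite highest-label variant of \citet{DBLP:journals/siamcomp/AhujaOST94} runs in \(\Oh(a_1 b+a_1^3)\), where \(a_1\) is the size of the smaller side of the bipartition.
\end{itemize}

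The main obstacle is the bipartite bound. With parts \(\{s\}\cup W\) and \(U\cup\{z\}\), the smaller side has size \(a_1\leq\min\{|U|,k\}+1=\Oh(\min\{n,k\})\). This gives \(\Oh(\min\{n,k\}\,nk+\min\{n,k\}^3)\), and \(\min\{n,k\}^3\leq \min\{n,k\}\,nk\) since \(\min\{n,k\}^2\leq nk\). The care needed is to confirm that the Ahuja--Orlin--Stein--Tarjan bound is stated in terms of the smaller part and applies to directed capacitated bipartite networks with the terminals placed inside the parts. If instead their bound requires \(s,z\) outside the bipartition, I would view the network without terminals as bipartite between \(U\) and \(W\) and cite the terminal-attached version of their theorem. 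The \(+1\) from the terminals does not affect the order.

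Multiplying each per-graph bound by \(m\), or by \(mk\), yields the stated totals.
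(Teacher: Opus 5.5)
Your proposal is correct and follows the paper's proof essentially step for step: count \(\Oh(m)\) versus \(\Oh(mk)\) graphs, then bound each graph by one of three costs (\(b^{1+o(1)}\) for almost-linear flow, the Ahuja--Orlin--Stein--Tarjan bipartite bound with smaller part \(p=\min\{|U|+1,k+1\}\), or the Cheriyan--Maheshwari \(\Oh(a^2\sqrt b)\) bound), and multiply. The differences are cosmetic: the paper quotes the bipartite bound as \(\Oh(pb+\min\{p^3,p^2\sqrt b\})\), which your \(\Oh(a_1b+a_1^3)\) weakens but which yields the same \(\Oh(nk\min\{n,k\})\) via \(p^3=\Oh(pnk)\), and you also account for rounding \(\alpha\) and forming the sets \(N_c\setminus N_w\), which the paper leaves implicit.
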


\begin{proof}
The two algorithms construct \(\Oh(m)\) and \(\Oh(mk)\) graphs,
respectively. Almost-linear maximum flow takes \(b^{1+o(1)}\) time here
\citep{DBLP:conf/focs/Brand0PKLGSS23,DBLP:journals/jacm/ChenKLPGS25}.

For the specialised bipartite method, let \(p=\min\{|U|+1,k+1\}\), the smaller
bipartition size. Its highest-label bound is
\[
 \Oh\bigl(pb+\min\{p^3,p^2\sqrt b\}\bigr)
\]
per graph \citep{DBLP:journals/siamcomp/AhujaOST94}. Substituting
\(b=\Oh(nk)\) and \(p=\Oh(\min\{n,k\})\), and observing that
\(p^3=\Oh(pnk)\), gives the stated coarse bound.
Generic highest-label preflow--push takes \(\Oh(a^2\sqrt b)\) time per
graph \citep{DBLP:journals/siamcomp/CheriyanM89}. Multiplying by the number
of graphs proves the claim.
\end{proof}

\begin{remark}[Immediate implementation]
The graphs can be passed directly to NetworkX's highest-label
\texttt{preflow\_push} routine \citep{hagberg2008}. Residual reachability from
\(s\) gives a mincut source side, after which canonicalisation extracts
\(N'\). This uses the generic, not the specialised, bound above.
\end{remark}

We can therefore decide \(\alpha\)-PJR$^+$ and, when it fails, return an explicit
violating triple. The next section uses this verifier to locate the exact
boundary between values of \(\alpha\) that fail and values that hold.

\section{Computing the exact \texorpdfstring{PJR$^+$}{PJR+} threshold}
\label{sec:threshold}

Section~\ref{sec:verification} decides whether a committee satisfies
\(\alpha\)-PJR$^+$ for a fixed \(\alpha\). This section instead asks for the exact point
at which the answer changes. If the committee fails at some value \(\alpha\), then
it also fails at every smaller positive value, because decreasing \(\alpha\) only
reduces the required group size. There is therefore a boundary \(\tau(W)\):
the committee fails for \(0<\alpha\leq\tau(W)\) and satisfies the axiom for
\(\alpha>\tau(W)\). The boundary itself is included in the failing range. A smaller
value of \(\tau(W)\) is a stronger guarantee, since the committee then satisfies
PJR$^+$ for a larger range of parameters. We compute this boundary by
binary search using the verifier from the previous section. A specialised
parametric formulation is included in Appendix~\ref{app:parametric} for
completeness.

To derive the boundary, fix an unelected candidate \(c\) and a nonempty group
\(N'\subseteq N_c\) with \(r_W(N')<k\). The smallest representation level at
which this group can be underrepresented is \(\ell=r_W(N')+1\). At this level,
the group is large enough to violate the axiom precisely when
\[
 \alpha\leq \frac{k|N'|}{n(r_W(N')+1)}.
\]
Thus the ratio on the right is the largest parameter at which this particular
pair \((c,N')\) can cause a violation. The threshold is the largest such ratio
over all possible pairs.

\begin{definition}[PJR$^+$ threshold]
For each \(c\in C\setminus W\) and nonempty \(N'\subseteq N_c\) with
\(r_W(N')<k\), define
\[
 q(c,N')=\frac{k|N'|}{n(r_W(N')+1)}.
\]
The \emph{PJR$^+$ threshold} of \(W\) is
\[
 \tau(W)=
 \max_{\substack{c\in C\setminus W,\ \emptyset\neq N'\subseteq N_c\\r_W(N')<k}}
 q(c,N'),
\]
where the maximum of the empty set is \(0\).
\end{definition}

When \(\tau(W)>0\), a \emph{boundary witness} is a triple
\((c,N',\ell)\) that is a violation at \(\alpha=\tau(W)\). Such a witness records
both the group that determines the threshold and the representation level it
claims.

\begin{proposition}[Boundary semantics]
\label{prop:threshold-boundary}
For \(\alpha>0\), \(W\) fails \(\alpha\)-PJR$^+$ exactly when
\(\alpha\leq\tau(W)\). If
\(\tau(W)>0\), a maximizing pair \((c,N')\) gives the boundary witness
\((c,N',r_W(N')+1)\). Hence ordinary PJR\(^+\) holds exactly when
\(\tau(W)<1\).
\end{proposition}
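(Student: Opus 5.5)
The plan is to prove the two directions of the equivalence directly from the definitions, and then read off the witness and the \(\alpha=1\) case.

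\emph{Failure at \(\alpha\) implies \(\alpha\le\tau(W)\).} Suppose \(W\) fails \(\alpha\)-PJR\(^+\) with violation \((c,N',\ell)\). The definition gives \(c\notin W\), \(\emptyset\neq N'\subseteq N_c\), \(\ell\in[k]\), \(|N'|\geq\alpha\ell n/k\) and \(r_W(N')<\ell\).
\begin{itemize}
\item Since \(r_W(N')\leq\ell-1\leq k-1\), the pair \((c,N')\) is admissible in the maximum defining \(\tau(W)\).
\item Since \(r_W(N')\) is an integer, \(\ell\geq r_W(N')+1\).
\item Therefore \(\alpha\leq k|N'|/(n\ell)\leq k|N'|/(n(r_W(N')+1))=q(c,N')\leq\tau(W)\).
\end{itemize}

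\emph{\(\alpha\le\tau(W)\) implies failure, with the boundary witness.} Suppose \(0<\alpha\leq\tau(W)\). Then \(\tau(W)>0\), so the index set is nonempty. It is also finite, so the maximum is attained by some pair \((c,N')\). Set \(\ell=r_W(N')+1\).
\begin{itemize}
\item Admissibility gives \(r_W(N')<k\), so \(\ell\in[k]\).
\item Trivially \(r_W(N')<\ell\).
\item From \(\alpha\leq q(c,N')=k|N'|/(n\ell)\) we get \(|N'|\geq\alpha\ell n/k\).
\end{itemize}
So \((c,N',\ell)\) is an \(\alpha\)-PJR\(^+\) violation. Taking \(\alpha=\tau(W)\) in this argument shows that \((c,N',r_W(N')+1)\) is a boundary witness.

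\emph{Ordinary PJR\(^+\).} Take \(\alpha=1\) in the equivalence: \(W\) fails PJR\(^+\) exactly when \(1\leq\tau(W)\), so it satisfies PJR\(^+\) exactly when \(\tau(W)<1\).

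No step here is a real obstacle. The only points needing care are the integrality step \(\ell\geq r_W(N')+1\), and checking that the maximum in \(\tau(W)\) is attained (finiteness) and uses the empty-set convention \(0\) consistently. Since \(\alpha>0\), that convention makes the case \(\tau(W)=0\) agree with the equivalence.
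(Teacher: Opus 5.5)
Your proof is correct and follows essentially the same route as the paper. The paper has no separate proof environment; it justifies the proposition through the discussion just before the definition of \(\tau(W)\). There, the smallest level at which an admissible pair \((c,N')\) can be underrepresented is \(r_W(N')+1\), so that pair causes a violation exactly when \(\alpha\leq q(c,N')\), and \(\tau(W)\) is the largest such ratio. Your version makes explicit the steps the paper leaves implicit: the integrality step \(\ell\geq r_W(N')+1\), the fact that the maximum is attained, and the empty-maximum convention.
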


\subsection{Binary search with ordinary mincuts}

Recall the finite set of possible query values
\[
 \mathcal R(n,k)=
 \left\{\frac{jk}{nh}:j\in[n],\ h\in[k]\right\}.
\]
Every positive threshold belongs to this set: for a pair \((c,N')\) attaining
the maximum in the definition, take \(j=|N'|\) and
\(h=r_W(N')+1\). We can therefore find the threshold by binary search over
\(\mathcal R(n,k)\), using only the ordinary mincut verifiers from
Section~\ref{sec:verification}.

\begin{algorithm}[t]
\caption{Compute the PJR$^+$ threshold by binary search}
\label{alg:threshold}
\KwIn{Election and committee \(W\).}
\KwOut{\((\tau(W),Q)\), where \(Q\) is a triple that violates
\(\tau(W)\)-PJR$^+$, or \(\textsf{None}\) if \(\tau(W)=0\).}
Form the sorted distinct list \(R=\mathcal R(n,k)\). With each value \(\alpha\),
store any pair \((j,h)\) such that \(\alpha=jk/(nh)\)\;
Run \Cref{alg:large-t} at \(\alpha=1\), represented by \(j=n,h=k\)\;
\eIf{a violation is returned}{
 \(R'\gets R\cap[1,\infty)\), \(\mathcal A\gets\Cref{alg:large-t}\)\;
}{
 \(R'\gets R\cap(0,1)\), \(\mathcal A\gets\Cref{alg:small-t}\)\;
}
Binary-search \(R'\) for its largest value \(\widehat\alpha\) at which \(\mathcal A\)
returns a violation\;
\lIf{no such value exists}{\Return{\((0,\textsf{None})\)}}
Run \(\mathcal A\) once more at \(\widehat\alpha\) to obtain \(Q\)\;
\Return{\((\widehat\alpha,Q)\)}\;
\end{algorithm}

\begin{theorem}[Correctness of binary search]
\label{thm:threshold-correct}
\Cref{alg:threshold} returns the threshold and, when it is positive, a
violation at the boundary.
\end{theorem}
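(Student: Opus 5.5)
The plan is to combine three ingredients: the monotonicity in Proposition~\ref{prop:threshold-boundary}, the fact that every positive threshold lies in \(\mathcal R(n,k)\), and the exactness of the two verifiers given by Theorems~\ref{thm:large-correct} and~\ref{thm:small-correct}. By Proposition~\ref{prop:threshold-boundary}, the predicate ``\(W\) fails \(\alpha\)-PJR\(^+\)'' is downward closed in \(\alpha\): it holds exactly on \((0,\tau(W)]\). So along the sorted list \(R\) the answers form a prefix of ``fail'' followed by a suffix of ``satisfy''. Binary search for the last failing index is therefore valid, provided each query is answered correctly.

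First I will justify the branch. At \(\alpha=1\) (represented by \(j=n,h=k\)), Theorem~\ref{thm:large-correct} says that \Cref{alg:large-t} returns a violation iff \(\tau(W)\geq1\). In that case every value of \(R'=R\cap[1,\infty)\) is a query with \(\alpha\geq1\), so \(\mathcal A=\Cref{alg:large-t}\) answers each query exactly. Otherwise \(\tau(W)<1\), and every value of \(R'=R\cap(0,1)\) is a query with \(\alpha<1\), so Theorem~\ref{thm:small-correct} makes \Cref{alg:small-t} exact on every query. In both branches each stored pair \((j,h)\) has the form required by the algorithms, so no rounding is needed.

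Next I will show that \(\widehat\alpha=\tau(W)\). Suppose \(\tau(W)>0\). Taking a maximizing pair \((c,N')\) with \(j=|N'|\) and \(h=r_W(N')+1\leq k\) shows \(\tau(W)\in\mathcal R(n,k)\). The branch choice then places \(\tau(W)\) in \(R'\): it lies in \([1,\infty)\) exactly when the \(\alpha=1\) test fails the committee. By monotonicity and exactness, \(\mathcal A\) reports a violation at every value of \(R'\) that is at most \(\tau(W)\) and at no larger value, so the largest reported value is \(\tau(W)\). The final run of \(\mathcal A\) at \(\widehat\alpha\) returns a triple, which by the correctness theorems is a violation at \(\alpha=\tau(W)\), that is, a boundary witness. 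If \(\tau(W)=0\), the \(\alpha=1\) test finds nothing, and no value of \(R\cap(0,1)\) yields a violation. The search then finds no value and the algorithm returns \((0,\textsf{None})\), as required.

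The main obstacle is the membership \(\tau(W)\in R'\) in the correct branch, together with the claim that the binary search is well defined. I will handle membership directly via the representation \(j=|N'|\in[n]\), \(h=r_W(N')+1\in[k]\), using \(r_W(N')<k\). For well-definedness, the point is that the fail/satisfy pattern restricted to the sorted list \(R'\) is a prefix/suffix split; this follows immediately from Proposition~\ref{prop:threshold-boundary}.
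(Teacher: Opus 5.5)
Your proposal is correct and follows essentially the same route as the paper's proof. You use Proposition~\ref{prop:threshold-boundary} to get the downward-closed failure region, note that every positive threshold lies in $\mathcal R(n,k)$, use the $\alpha=1$ query to choose the branch in which the appropriate verifier is exact, and take the final verifier call as the boundary witness; your prefix/suffix argument for binary search and your check that $\tau(W)$ lands in the chosen $R'$ only make explicit what the paper leaves implicit.
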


\begin{proof}
By \Cref{prop:threshold-boundary}, failure occurs exactly for
\(0<\alpha\leq\tau(W)\), and every positive threshold belongs to
\(\mathcal R(n,k)\). If the query at \(1\) returns a violation, then
\(\tau(W)\geq1\);
the \(\alpha\geq1\) verifier (\Cref{alg:large-t}) therefore finds the largest failing value in
\(R\cap[1,\infty)\). If the query at \(1\) returns no violation, then
\(\tau(W)<1\);
the \(0<\alpha<1\) verifier (\Cref{alg:small-t}) finds the largest failing value in
\(R\cap(0,1)\). Thus binary search returns \(\tau(W)\). At this value, the
final verifier call returns a violation at the boundary. If no positive
breakpoint fails, the threshold is \(0\).
\end{proof}

\begin{proposition}[Binary-search running times]
\label{prop:threshold-runtime}
In the worst case, the running times of \Cref{alg:threshold} are
\[
\begin{array}{ll}
\Oh(mk(nk)^{1+o(1)})&
    \text{with almost-linear maximum flow},\\
\Oh(mnk^2\min\{n,k\}\log(nk))&
    \text{with specialised bipartite preflow--push},\\
\Oh(mk(n+k)^2\sqrt{nk}\log(nk))&
    \text{with generic preflow--push}.
\end{array}
\]
When \(\tau(W)\geq1\), the factor \(k\) can be removed from each bound.
\end{proposition}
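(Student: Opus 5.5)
The plan is to count the verifier calls made by \Cref{alg:threshold} and multiply by the per-call costs from \Cref{prop:verification-runtime}, then add the preprocessing cost.

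\emph{Preprocessing.} Forming and sorting \(R=\mathcal R(n,k)\) touches at most \(nk\) values. With comparison sorting this costs \(\Oh(nk\log(nk))\); ratios \(jk/(nh)\) are compared by cross-multiplication of integers of size \(\Oh(nk)\), so each comparison is constant time. Deduplication, keeping one pair \((j,h)\) per value, is linear after sorting. This cost is dominated by every stated bound: each bound contains at least one factor \(mnk\), together with a \(\log(nk)\) factor or an extra \((nk)^{o(1)}\) and a factor \(k\geq2\).

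\emph{Number of calls.} The algorithm makes one call of \Cref{alg:large-t} at \(\alpha=1\). It then binary-searches \(R'\), which has at most \(nk\) elements, so that is \(\Oh(\log(nk))\) calls of \(\mathcal A\). Binary search is valid because failure is monotone in \(\alpha\) by \Cref{prop:threshold-boundary}, and \Cref{thm:threshold-correct} already confirms correctness. Finally there is one more call at \(\widehat\alpha\). Each query value is stored as some pair \((j,h)\) with \(j\in[n],h\in[k]\), so the graphs \(G(U;j,h)\) have capacities at most \(\max\{n,k\}\). The per-graph bounds used in \Cref{prop:verification-runtime} therefore apply unchanged. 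When \(\tau(W)\geq1\), only \Cref{alg:large-t} is invoked, at cost \(\Oh(m\,T)\) per call, where \(T\) is the per-graph cost. Otherwise \Cref{alg:small-t} is invoked, at \(\Oh(mk\,T)\) per call. Multiplying by \(\Oh(\log(nk))\) calls gives the preflow--push bounds and their \(k\)-free versions directly.

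\emph{Almost-linear case.} The stated bound carries no explicit logarithm. To get it, absorb \(\log(nk)\) into the subpolynomial factor: \((nk)^{1+o(1)}\log(nk)=(nk)^{1+o(1)}\). The \(o(1)\) in the exponent is an arbitrary vanishing function, so multiplying by a polylogarithmic factor keeps it of the same form. I expect this to be the main point needing care. One must state that the \(o(1)\) is allowed to change between lines, and that \(\log(nk)=(nk)^{o(1)}\). The remaining concern is purely bookkeeping: the initial call at \(\alpha=1\) uses \Cref{alg:large-t} even in the \(\tau(W)<1\) branch, costing \(\Oh(mT)\), which is dominated by \(\Oh(mkT)\). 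Also, the final extra call and the canonicalisation overhead \(\Oh(b)\) per graph are within the same per-graph cost \(T\).
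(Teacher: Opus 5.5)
Your proposal is correct and follows the same route as the paper: it makes \(\Oh(\log(nk))\) verifier calls over at most \(nk\) breakpoints and costs each via \Cref{prop:verification-runtime}, with the extra factor \(k\) exactly when the \(0<\alpha<1\) verifier is used. It then absorbs the logarithm into \((nk)^{1+o(1)}\) for almost-linear flow and treats constructing and sorting the breakpoints as dominated. Your extra bookkeeping (the initial call at \(\alpha=1\), the final call at \(\widehat\alpha\), and the capacity bound \(\max\{n,k\}\) needed by the almost-linear flow result) only makes explicit what the paper's proof leaves implicit.
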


\begin{proof}
There are at most \(nk\) breakpoints, so binary search makes
\(\Oh(\log(nk))\) verifier calls. If the query at \(1\) finds a violation,
all subsequent calls use the \(\alpha\geq1\) verifier. Otherwise they use the
\(0<\alpha<1\) verifier, whose bounds in \Cref{prop:verification-runtime} have one
additional factor of \(k\). For almost-linear flow, the binary-search factor is
absorbed because
\((nk)^{1+o(1)}\log(nk)=(nk)^{1+o(1)}\); it remains explicit in the two
push--relabel bounds. Constructing and sorting the breakpoints is dominated.
\end{proof}

Appendix~\ref{app:parametric} records a parametric formulation for completeness;
it has effectively the same running time as the specialised bipartite algorithm,
but may nevertheless be of interest.

\subsection{Comparison with \texorpdfstring{EJR$^+$}{EJR+}}

EJR$^+$, introduced by \citet{DBLP:conf/sigecom/Brill023}, provides a useful
comparison. Applying the same group-size scaling as above, we define an
\(\alpha\)-EJR$^+$ violation to consist of an unelected candidate \(c\), a level
\(\ell\in[k]\), and at least \(\alpha\ell n/k\) voters who approve \(c\) and each
approve fewer than \(\ell\) committee members. For fixed \(c\) and \(\ell\),
the largest possible such group is
\(
 B_{c,\ell}
 =
 \{i\in N_c:|A_i\cap W|<\ell\}.
\)
Consequently, \(W\) fails \(\alpha\)-EJR$^+$ exactly when
\(
 |B_{c,\ell}|\geq \alpha\ell n/k
\)
for some \(c\in C\setminus W\) and \(\ell\in[k]\). Iterating over \(c\),
\(\ell\), and the voters therefore verifies \(\alpha\)-EJR$^+$ in
\(\Oh(mnk)\) time for every \(\alpha>0\).

The same computation gives the exact EJR$^+$ threshold directly:
\[
 \tau_{\mathrm{EJR}^+}(W)
 =
 \max_{\substack{c\in C\setminus W\\ \ell\in[k]}}
 \frac{k|B_{c,\ell}|}{n\ell},
\]
which can also be evaluated in \(\Oh(mnk)\) time. For \(\alpha\geq1\), our
PJR$^+$ verifier runs in \(\Oh(m(nk)^{1+o(1)})\) time and therefore matches
the EJR$^+$ bound up to a subpolynomial factor. For \(0<\alpha<1\), and in the
worst case when computing the exact PJR$^+$ threshold, the PJR$^+$ bound has
one additional factor of \(k\).

\section{Conclusion}

We have shown that the coverage-minus-linear objective used to verify PJR$^+$
is a maximum-closure problem and hence can be represented by a mincut in a
bipartite graph. This yields an \(\Oh(m(nk)^{1+o(1)})\)-time verifier that
also returns an explicit violating group. The same reduction can instead be
implemented using specialised bipartite highest-label preflow--push, or
directly using NetworkX's generic preflow--push routine.

The exact PJR$^+$ threshold can be computed by binary search over its finitely
many possible values, using the same mincut verifiers. For thresholds
below \(1\), exactness is obtained by considering, for each unelected candidate
and committee member, only the voters who do not approve that committee member.
Appendix~\ref{app:parametric} records a parametric formulation for completeness;
it has effectively the same running time as the specialised bipartite algorithm,
but may nevertheless be of interest.

It would be interesting to see how well an implementation of our generic preflow-push verifier would compare to an implementation of the submodular-minimisation approach in terms of actual implemented speed, not just asymptotics.

\section*{Acknowledgments}
Drew Springham was supported by UK Research and Innovation [grant number EP/S023356/1], in the UKRI Centre for Doctoral Training in Safe and Trusted Artificial Intelligence (www.safeandtrustedai.org).
\section*{Usage of AI}
Throughout the research process, the author has used various AI models, primarily Claude Opus 5, and GPT-5.6 Sol.

\bibliography{refs}
\appendix
\section{A parametric-flow formulation for the threshold}
\label{app:parametric}

This appendix records a parametric formulation of the threshold problem, primarily for interest. It replaces the binary-search calls by specialised
parametric flow computations, but its asymptotic running time is effectively the same as for the specialised bipartite preflow-push algorithm; we replace the $\Oh(\log (nk))=\Oh(\log(\max\{n,k\}))$ binary search factor with $\Oh(\log(\min\{n,k\}))$.

An ordinary mincut algorithm takes a graph with fixed edge capacities and
returns an \(s\)--\(z\) mincut. In a parametric mincut problem, some
capacities are instead functions of a scalar parameter \(\lambda\), and the aim
is to determine how the mincut changes as \(\lambda\) varies. In the
setting used here, each varying capacity is a linear function of \(\lambda\),
so the capacity of any fixed cut is also linear in \(\lambda\). The mincut
value is therefore piecewise linear in $\lambda$: it follows the value of one cut until
another cut becomes cheaper. A \emph{breakpoint} is a value of \(\lambda\) at
which such a change occurs. The all-breakpoints algorithm of
\citet{DBLP:journals/siamcomp/GalloGT89} computes these
breakpoints. Write
\(0\leq\beta_1<\cdots<\beta_q\) for the breakpoints. For positive
\(\lambda\), the parameter intervals are
\((0,\beta_1)\), \((\beta_j,\beta_{j+1})\) for \(1\leq j<q\), and
\((\beta_q,\infty)\), with empty intervals omitted; at a breakpoint, multiple
cuts may tie. On each parameter interval one may choose a mincut source
side that remains constant throughout the interval. Because
our parameter-dependent sink capacities increase with \(\lambda\), these source
sides can be chosen to form a decreasing chain.\footnote{Technically, the standard parametric-flow convention uses non-increasing capacities into the sink, rather than non-decreasing capacities. The same result applies here by replacing each parameter-dependent sink capacity \(\lambda\) by \(n-\lambda\) on \(0\leq\lambda\leq n\), using the affine reparameterisation \(\lambda\mapsto n-\lambda\); the relevant ratios are at most \(n\), and the corresponding cuts are unchanged after reparameterisation.}
The breakpoint representation that stores, for each vertex, the breakpoint
at which it leaves the source side is described in
\citep{DBLP:journals/siamcomp/GalloGT89}; it uses
\(\Oh(a)\) space, where \(a\) is the number of vertices, and scanning the
stored values recovers a cut for any one interval in \(\Oh(a)\) time.
Fix \(U\subseteq N\). For a parameter \(\lambda\geq0\), let
\(H(U;\lambda)\) have vertex set
\(\{s,z,d\}\cup U\cup W\), with the following edges:
\begin{align*}
 (s,i) &\quad\text{of capacity \(1\), for every \(i\in U\)},\\
 (i,w) &\quad\text{of capacity \(1\), for \(i\in U\) and
     \(w\in A_i\cap W\)},\\
 (i,d) &\quad\text{of capacity \(1\), for every \(i\in U\)},\\
 (w,z) &\quad\text{of capacity \(\lambda\), for every \(w\in W\)},\\
 (d,z) &\quad\text{of capacity \(\lambda\)}.
\end{align*}
\Cref{fig:parametric-network} illustrates the construction.

\begin{figure}[t]
    \centering
    \begin{tikzpicture}[scale=1.2, transform shape]
  \node[style=terminal] (s) at (0,0) {$s$};
  \node[style=voter] (i1) at (2,2) {$1$};
  \node[style=voter] (i2) at (2,0) {$2$};
  \node[style=voter] (i3) at (2,-2) {$3$};
  \node[style=winner] (w1) at (5,2) {$w_1$};
  \node[style=winner] (w2) at (5,0) {$w_2$};
  \node[style=winner] (w3) at (5,-2) {$w_3$};
  \node[style=winner] (d) at (5,-4) {$d$};
  \node[style=terminal] (z) at (7,0) {$z$};

  \draw[style=flow edge] (s) to (i1);
  \draw[style=flow edge] (s) to (i2);
  \draw[style=flow edge] (s) to (i3);
  \draw[style=flow edge] (i1) to (w1);
  \draw[style=flow edge] (i1) to (w2);
  \draw[style=flow edge] (i2) to (w2);
  \draw[style=flow edge] (i3) to (w2);
  \draw[style=flow edge] (i3) to (w3);
  \draw[style=flow edge] (i1) to (d);
  \draw[style=flow edge] (i2) to (d);
  \draw[style=flow edge] (i3) to (d);
  \draw[style=flow edge] (w1) to node[style=edge label,above] {$\lambda$} (z);
  \draw[style=flow edge] (w2) to node[style=edge label,above] {$\lambda$} (z);
  \draw[style=flow edge] (w3) to node[style=edge label,below] {$\lambda$} (z);
  \draw[style=flow edge] (d) to node[style=edge label,below] {$\lambda$} (z);
\end{tikzpicture}
    \caption{A small example of \(H(U;\lambda)\). Unlabelled edges have
    capacity \(1\). The voter--winner edges encode approvals, and every voter
    is also joined to the dummy vertex \(d\).}
    \label{fig:parametric-network}
\end{figure}

In our graph, only edges into the sink depend on \(\lambda\), and all of their
capacities increase linearly with \(\lambda\). We need to prove that the right
endpoint of the final parameter interval on which some mincut selects at
least one voter is
\[
  \max_{\emptyset\neq N'\subseteq U}
  \frac{|N'|}{r_W(N')+1},
\]
and that a cut on this interval identifies a group \(N'\) attaining the
maximum. The output of the parametric flow algorithm will then give both the exact value of this
maximum and a witnessing voter group in one computation.

For a nonempty group \(N'\subseteq U\), consider the cut with source side
\[
 \{s\}\cup N'\cup\Gamma_W(N')\cup\{d\}.
\]
We call this the canonical cut associated with \(N'\). The canonical cut
associated with the empty group has source side \(\{s\}\).

\begin{lemma}[Canonical cuts in the parametric graph]
\label{lem:parametric-canonical}
Every \(s\)--\(z\) cut in \(H(U;\lambda)\) can be transformed, without
increasing its capacity, into the canonical cut associated with some
\(N'\subseteq U\). Consequently, at least one global mincut has
canonical form.
\end{lemma}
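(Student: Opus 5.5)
The plan mirrors the proof of \Cref{lem:canonical}, with one extra step to handle the dummy vertex \(d\). I start from an arbitrary source side \(S\) in \(H(U;\lambda)\) and apply three monotone moves in order. Each move is checked not to increase the capacity. The result will be exactly \(\{s\}\) or \(\{s\}\cup N'\cup\Gamma_W(N')\cup\{d\}\) with \(N'\neq\emptyset\).

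\emph{Step 1 (voters).} Move to the sink side every voter \(i\in U\cap S\) that has an out-neighbour outside \(S\). Such an out-neighbour is either a winner \(w\in A_i\cap W\) with \(w\notin S\), or \(d\notin S\). Moving \(i\) adds the edge \((s,i)\) of capacity \(1\). It removes at least one crossing out-edge of \(i\) of capacity \(1\). Every other edge incident to \(i\) either stops crossing or now points into the source side, so it does not count. Moving a voter only shrinks \(S\), so it never creates a new reason for another remaining voter to be moved. A single pass therefore suffices. Let \(N'=U\cap S\) after this step. Every out-neighbour of a voter in \(N'\) now lies in \(S\). In particular \(\Gamma_W(N')\subseteq S\), and if \(N'\neq\emptyset\) then \(d\in S\).

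\emph{Step 2 (winners).} Move to the sink side every winner in \(S\setminus\Gamma_W(N')\). Each such move removes its sink edge of capacity \(\lambda\geq0\). Its only in-edges come from voters, and none of these voters is in \(N'\). So no crossing edge is created.

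\emph{Step 3 (the dummy vertex).} This is the only genuinely new point, because \(d\) has in-edges from all of \(U\). If \(N'=\emptyset\) and \(d\in S\), I move \(d\) to the sink side. This removes \((d,z)\) of capacity \(\lambda\) and creates no crossing edge, since no voter remains on the source side. If \(N'\neq\emptyset\), Step 1 already guarantees \(d\in S\), and nothing needs to be done. In either case the source side is now the canonical cut associated with \(N'\). For \(N'=\emptyset\) it is \(\{s\}\).

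The step most likely to need care is the single-pass argument in Step 1. There I must confirm that the iteration terminates and that the final \(N'\) satisfies the closure property. This follows because the moves only shrink the source side, so each voter's condition is checked against a final set that only gains sink vertices. Finally, applying the transformation to any mincut yields a canonical cut of no larger capacity, which must also be a minimum cut. This gives the ``consequently'' clause.
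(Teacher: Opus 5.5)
Your proof is correct and is essentially the paper's argument: the paper splits on whether \(d\in S\) (moving every source-side voter when \(d\notin S\), since each new \((s,i)\) replaces the crossing \((i,d)\)), whereas you fold that case into Step~1 by counting \(d\) as an out-neighbour, after which your winner and dummy-vertex steps coincide with the paper's. One small fix: your single-pass claim is true, but the reason is not that \(S\) only shrinks, since shrinking \(S\) could in general create new reasons to move. The actual reason is that every out-neighbour of a voter is a winner or \(d\), and none of these change sides during Step~1.
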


\begin{proof}
Let \(S\) be the source side of an arbitrary cut. First suppose \(d\notin S\).
Move every source-side voter to the sink side. For each such voter \(i\), the
newly crossing edge \((s,i)\) has the same capacity as the previously crossing
edge \((i,d)\), while no other edge starts crossing from source to sink. Now
move every source-side committee member to the sink side. This can only remove
crossing edges, and the resulting source side is \(\{s\}\).

Now suppose \(d\in S\). As in the proof of \Cref{lem:canonical}, move to the
sink side every source-side voter that approves a committee member outside
\(S\). The new crossing source edge has capacity \(1\) and replaces at least
one crossing voter--winner edge of capacity \(1\), so the cut capacity does
not increase. Let \(N'\) be the voters that remain. Move every source-side
committee member outside \(\Gamma_W(N')\) to the sink side. This creates no
new crossing edge. If \(N'=\emptyset\), also move \(d\) to the sink side.
The resulting cut is the canonical cut associated with \(N'\).

Thus every cut has a canonical cut of no greater capacity. If the original
cut is globally minimum, the transformed cut cannot have strictly smaller
capacity, and hence is also globally minimum.
\end{proof}

For a nonempty \(N'\), the excluded voters contribute \(|U|-|N'|\) to its
canonical cut. The approved committee members contribute
\(\lambda r_W(N')\), and the dummy vertex contributes one further
\(\lambda\). Thus the cut value is
\(
 |U|-|N'|+\lambda(r_W(N')+1),
\)
whereas the canonical cut associated with the empty group has value \(|U|\).
The purpose
of \(d\) is now explicit: it creates the \(+1\) in the denominator below. The
cut for \(N'\) has value at most \(|U|\) exactly when
\[
 \lambda\leq\frac{|N'|}{r_W(N')+1}.
\]

\begin{proposition}[Parametric characterisation]
\label{prop:parametric}
Define
\[
 \lambda_U=\max_{\emptyset\neq N'\subseteq U}
 \frac{|N'|}{r_W(N')+1},
\]
with \(\lambda_{\emptyset}=0\). If \(U\neq\emptyset\), then \(\lambda_U\) is
the breakpoint separating two regimes: for \(\lambda>\lambda_U\), the empty
group has smaller cut value than every nonempty group; for
\(\lambda<\lambda_U\), the empty group is not optimal. Moreover, throughout
the interval immediately below \(\lambda_U\), every group represented by a
minimum \(s\)--\(z\) cut whose source side has canonical form attains the
maximum defining \(\lambda_U\).
\end{proposition}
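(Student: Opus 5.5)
By \Cref{lem:parametric-canonical}, for every \(\lambda\geq0\) the mincut value of \(H(U;\lambda)\) equals the minimum over canonical cuts. We have already computed these values: the empty group gives \(|U|\), and a nonempty \(N'\subseteq U\) gives \(f_{N'}(\lambda)=|U|-|N'|+\lambda(r_W(N')+1)\). So the plan is to compare a family of affine functions of \(\lambda\) against the constant \(|U|\). Assume \(U\neq\emptyset\). Since there are finitely many nonempty groups and each ratio \(|N'|/(r_W(N')+1)\) is positive, \(\lambda_U\) is a well-defined positive maximum.

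\emph{First regime.} Let \(\lambda>\lambda_U\). For every nonempty \(N'\) we have \(\lambda(r_W(N')+1)>\lambda_U(r_W(N')+1)\geq |N'|\), hence \(f_{N'}(\lambda)>|U|\). Thus the empty group is strictly better than every nonempty group.

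\emph{Second regime.} Let \(\lambda<\lambda_U\) and choose a maximiser \(N^*\). Then \(\lambda(r_W(N^*)+1)<|N^*|\), so \(f_{N^*}(\lambda)<|U|\), and the empty group is not optimal. Together these show that \(\lambda_U\) is exactly where the lower envelope \(\min\{|U|,\min_{N'}f_{N'}\}\) switches from a nonempty group to the empty one, so it is a breakpoint in the sense used above.

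\emph{Final claim.} This step needs the most care: a linear comparison only works locally below \(\lambda_U\), so the interval must be chosen explicitly. Write \(\rho(N')=|N'|/(r_W(N')+1)\) and \(g_{N'}(\lambda)=f_{N'}(\lambda)-|U|=(r_W(N')+1)(\lambda-\rho(N'))\). Maximisers have \(g=0\) at \(\lambda_U\). Non-maximisers have \(\rho(N')<\lambda_U\), so \(g_{N'}(\lambda_U)>0\). Let \(\delta>0\) be the minimum of \(g_{N'}(\lambda_U)\) over non-maximisers; this is a minimum over a finite set, hence positive. Every slope \(r_W(N')+1\) is at most \(k+1\). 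I would then take the interval \(\bigl(\lambda_U-\delta/(2(k+1)),\lambda_U\bigr)\), intersected with \(\lambda>0\).

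On this interval, each maximiser satisfies \(g_{N^*}(\lambda)=(r_W(N^*)+1)(\lambda-\lambda_U)<0\). Each non-maximiser satisfies \(g_{N'}(\lambda)\geq\delta-(k+1)(\lambda_U-\lambda)>\delta/2>0\). So the minimum cut value is strictly below \(|U|\), and it is attained only by canonical cuts of maximising groups. In particular, a mincut of canonical form cannot be the empty-group cut, and it cannot be a non-maximising group.

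One subtlety remains. Distinct canonical source sides correspond to distinct groups, since the voters on the source side determine \(N'\). Hence ``the group represented by'' a canonical mincut is well-defined, and the argument applies to every canonical minimum cut, not merely to some minimum cut.
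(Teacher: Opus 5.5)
Your proof is correct and follows the paper's argument. You reduce to canonical cuts via \Cref{lem:parametric-canonical}, compare the affine values \(|U|-|N'|+\lambda(r_W(N')+1)\) against \(|U|\) to get both regimes, and exhibit a left-neighbourhood of \(\lambda_U\) on which maximisers lie strictly below \(|U|\) and every other group lies strictly above. The only difference is the choice of interval: the paper takes the larger interval \((\rho_U,\lambda_U)\), where \(\rho_U\) is the largest ratio strictly below \(\lambda_U\) (or \(0\) if there is none). That choice works directly from your own identity \(g_{N'}(\lambda)=(r_W(N')+1)(\lambda-\rho(N'))\), which is positive whenever \(\lambda>\rho(N')\), so your \(\delta\) and slope-bound perturbation is unnecessary. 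It also covers a case you leave implicit: when every nonempty group is a maximiser, your \(\delta\) is a minimum over an empty set and so is undefined.
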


\begin{proof}
The claim is immediate for \(U=\emptyset\). Suppose \(U\neq\emptyset\).
By \Cref{lem:parametric-canonical}, it suffices to compare canonical cuts. A nonempty group can have cut value at most
\(|U|\) exactly when
\(\lambda\leq |N'|/(r_W(N')+1)\). Taking the maximum over all nonempty groups
gives the two regimes.

There are only finitely many ratios
\(|N'|/(r_W(N')+1)\). Let \(\rho_U\) be the largest such ratio strictly below
\(\lambda_U\), taking \(\rho_U=0\) if there is none, and choose any real
\(\lambda\in(\rho_U,\lambda_U)\). The parameter \(\lambda\) may range over all
nonnegative real numbers and need not belong to \(\mathcal R(n,k)\). Each ratio
considered here is rational, and multiplying the maximizing ratio
\(\lambda_U\) by \(k/n\) later gives a threshold value in
\(\mathcal R(n,k)\). A group attaining \(\lambda_U\) then has cut
value below \(|U|\), so a mincut cannot represent the empty group. Any
nonempty group whose ratio is below \(\lambda_U\) has cut value above
\(|U|\), so it cannot be minimum either. Consequently every minimum
\(s\)--\(z\) cut at this value of \(\lambda\) whose source side has canonical
form represents a group attaining \(\lambda_U\).
\end{proof}

By \Cref{prop:parametric}, the last interval on which voters are selected ends
at \(\lambda_U\). At the breakpoint itself, the cut selecting no voters and
the cuts representing groups attaining \(\lambda_U\) all have value \(|U|\).
A mincut algorithm could therefore return the empty group. We instead use
a mincut from the interval immediately below \(\lambda_U\), whose
canonical form represents a group attaining \(\lambda_U\) by
\Cref{prop:parametric}.

\begin{theorem}[Exact threshold and running time by parametric flow]
\label{thm:parametric-threshold}
Run the verifier once at \(\alpha=1\), interpreting a maximum over an empty family
as \(0\).
\begin{enumerate}
    \item If it returns a violation, then
    \[
      \tau(W)=\frac{k}{n}\max_{c\in C\setminus W}\lambda_{N_c}.
    \]
    \item If it returns no violation, then
    \[
      \tau(W)=\frac{k}{n}
      \max_{\substack{c\in C\setminus W\\w\in W}}
      \lambda_{N_c\setminus N_w}.
    \]
\end{enumerate}
For each relevant set \(U\), the parametric algorithm may return several
breakpoints. Retain its final breakpoint \(\lambda_U\) and a canonical mincut from
the interval immediately below it. After choosing a set \(U\) for which
\(\lambda_U\) is largest, the voter set \(N'\) represented by this cut yields a
boundary witness by setting \(\ell=r_W(N')+1\).
The worst-case running time is
\[
 \Oh\bigl(mnk^2\min\{n,k\}\log(\min\{n,k\})\bigr).
\]
If the query at \(\alpha=1\) returns a violation, the factor \(k\) can be removed.
\end{theorem}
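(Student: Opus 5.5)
We split the proof into three parts. First we establish the two threshold formulas by comparing the definition of \(\tau(W)\) with \(\lambda_U\). Next we check that the witness is correct, using \Cref{prop:parametric}. Finally we bound the running time via the Gallo--Grigoriadis--Tarjan all-breakpoints algorithm in its bipartite form.

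For the formulas, we rewrite \(q(c,N')=\frac{k}{n}\cdot\frac{|N'|}{r_W(N')+1}\). In case 1, the verifier at \(\alpha=1\) finds a violation, so \(\tau(W)\geq1\) by \Cref{prop:threshold-boundary}. It remains to show that the maximum over all nonempty \(N'\subseteq N_c\) coincides with the maximum restricted to \(r_W(N')<k\). Any group with \(r_W(N')=k\) has ratio \(\frac{k}{n}\cdot\frac{|N'|}{k+1}<1\). Such groups therefore cannot attain a maximum that is at least \(1\), so dropping the restriction changes nothing. If \(C\setminus W\) were empty the maximum would be empty, but then no violation could exist, so this does not occur in case 1.

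In case 2, the condition \(r_W(N')<k\) holds exactly when some \(w\in W\) is approved by no voter of \(N'\), that is, when \(N'\subseteq N_c\setminus N_w\). Every group in every restricted family \(N_c\setminus N_w\) satisfies \(r_W(N')\leq k-1\). Hence the union over \(w\) of these families is precisely the feasible family in the definition of \(\tau(W)\), and the two maxima agree. An empty family gives \(0\) on both sides.

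For the witness, we choose the pair \((c,U)\) maximising \(\lambda_U\). \Cref{prop:parametric} shows that a canonical mincut on the interval immediately below \(\lambda_U\) represents a group \(N'\) attaining \(\lambda_U\). By the previous paragraph, \(q(c,N')=\tau(W)\) and \(r_W(N')<k\). Then \Cref{prop:threshold-boundary} shows that \((c,N',r_W(N')+1)\) is a boundary witness. There is one subtlety in case 1: a maximiser with \(r_W(N')=k\) could in principle occur, and the previous paragraph excludes it. The canonical form is obtained with \Cref{lem:parametric-canonical}, in the same linear time as in \Cref{sec:mincut}. This linear-time recovery applies once the GGT cut for the interval has been recovered in \(\Oh(a)\) time, as recorded earlier.

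For the running time, the main obstacle is justifying a per-graph bound of \(\Oh(nk\min\{n,k\}\log\min\{n,k\})\). The monotone reparameterisation from the footnote makes the sink capacities nonincreasing, so the GGT framework applies. For bipartite graphs with smaller side \(p=\Oh(\min\{n,k\})\), Ahuja--Orlin--Stein--Tarjan give a parametric version of their bipartite preflow--push algorithm. Its running time is that of a single flow computation with an extra factor of \(\log p\) from the logarithmic capacity scaling, roughly \(\Oh(pb\log(p^2/b+2))\), so at most \(\Oh(pb\log p)\). The dummy vertex \(d\) adds one vertex and \(|U|\) edges, which keeps \(b=\Oh(nk)\) and \(p\) unchanged up to an additive \(1\). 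I would first try to quote the AOST parametric bound directly. Failing that, I would fall back on running GGT with the bipartite highest-label routine and check that it costs only a \(\log p\) overhead. Case 1 needs one verifier call, costing \(\Oh(mnk\min\{n,k\})\) by \Cref{prop:verification-runtime}, followed by \(\Oh(m)\) parametric computations. Case 2 needs \(\Oh(mk)\) parametric computations. Multiplying out gives the stated bound, with the factor \(k\) removed in case 1.
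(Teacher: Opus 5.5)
Your derivation of the two formulas and of the boundary witness matches the paper's proof step for step: case~1 uses $\frac{k}{n}\frac{|N'|}{k+1}<1\leq\tau(W)$, case~2 uses $r_W(N')<k\iff N'\subseteq N_c\setminus N_w$ for some $w$, and both then invoke \Cref{prop:parametric} and \Cref{prop:threshold-boundary}.

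The gap is in the running time. You quote the Ahuja--Orlin--Stein--Tarjan bipartite parametric bound as if it held unconditionally with $p$ the smaller side. But an all-breakpoints computation on a bipartite network can have more than $\Oh(p)$ breakpoints. Vertices on the larger side that carry their own parametric edges can each flip at a different $\lambda$, giving one breakpoint each. Here the parametric edges $(w,z)$ and $(d,z)$ sit on $\{s\}\cup W\cup\{d\}$, which is the larger side whenever $|U|\leq k$. Nothing in your argument bounds the number of breakpoints by $p$.

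The paper supplies exactly this step. After canonicalisation (\Cref{lem:parametric-canonical}), the mincut value is the lower envelope of the constant $|U|$ and the lines $|U|-|N'|+\lambda(r_W(N')+1)$. Among groups with the same $r_W(N')$, only one of maximum cardinality can be optimal. So the pieces of this concave envelope have distinct slopes in $\{0,\ldots,k+1\}$ and distinct intercepts in $\{0,\ldots,|U|\}$. That gives only $\Oh(\min\{|U|,k\})=\Oh(p)$ relevant breakpoints. For these breakpoints, the dynamic-tree bipartite parametric result gives $\Oh(pb\log p)$ per set $U$.

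Two smaller points. First, the logarithmic factor comes from the dynamic-tree implementation, as in $\Oh(pb\log(p^2/b+2))$, not from capacity scaling. Second, your fallback of running GGT with the bipartite highest-label routine at ``only a $\log p$ overhead'' is unsupported as stated, and it would need the same breakpoint bound anyway.
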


\begin{proof}
If the query at \(1\) returns a violation, then \(\tau(W)\geq1\). Maximizing
\(\lambda_{N_c}\) considers every nonempty group \(N'\subseteq N_c\), including
groups with \(r_W(N')=k\), which are excluded from the definition of
\(\tau(W)\). For any such excluded group, its value in the first displayed
maximum, after multiplication by \(k/n\), is at most
\[
 \frac{k}{n}\frac{|N'|}{k+1}\leq\frac{k}{k+1}<1,
\]
whereas \(\tau(W)\geq1\). The excluded groups therefore cannot change the
maximum. For every remaining group,
\[
 \frac{k}{n}\frac{|N'|}{r_W(N')+1}=q(c,N'),
\]
so the first formula is exactly the definition of \(\tau(W)\).

If the query at \(1\) returns no violation, then \(\tau(W)<1\), and we must
exclude groups with \(r_W(N')=k\) explicitly. For \(N'\subseteq N_c\),
\[
 r_W(N')<k
 \quad\Longleftrightarrow\quad
 \exists w\in W:\ N'\cap N_w=\emptyset
 \quad\Longleftrightarrow\quad
 \exists w\in W:\ N'\subseteq N_c\setminus N_w.
\]
Thus, as \(c\) and \(w\) vary, the sets \(N_c\setminus N_w\) contain exactly
the groups permitted in the definition of \(\tau(W)\). This proves the second
formula. In both cases, \Cref{prop:parametric} returns a group attaining the
relevant maximum, and \Cref{prop:threshold-boundary} turns it into a boundary
witness.

For the running time, one graph \(H(U;\lambda)\) has
\(b=\Oh(nk)\) edges and smaller bipartition size
\(p=\min\{|U|+1,k+2\}=\Oh(\min\{n,k\})\). After canonicalisation, among
groups with the same \(r_W(N')\), only one of maximum cardinality can be
optimal. Hence there are at most
\(\Oh(p)\) relevant breakpoints. The dynamic-tree
bipartite parametric-flow result of
\citet{DBLP:journals/siamcomp/AhujaOST94}
applies to these breakpoints and gives a running time of
\(\Oh(pb\log p)\) per voter set \(U\). If the query at \(1\)
returns a violation, there are \(\Oh(m)\) sets \(U=N_c\). Otherwise there are
\(\Oh(mk)\) sets \(U=N_c\setminus N_w\). Substitution gives the two stated
bounds.
\end{proof}

This method requires a specialised parametric-flow implementation; it is not
provided by NetworkX's ordinary \texttt{preflow\_push} routine.
%%%%%%%%%%%%%%%%%%%%%%%%%%%%%%%%%%%%%%%%%%%%%%%%%%%%%%%%%%%%%%%%%%%%%%%%%

% At the very end of the paper, please include your contact details:

\begin{contact}
Drew Springham\\
Informatics Department, King's College London\\
London, United Kingdom\\
\email{drew.springham@kcl.ac.uk}
\end{contact}

%%%%%%%%%%%%%%%%%%%%%%%%%%%%%%%%%%%%%%%%%%%%%%%%%%%%%%%%%%%%%%%%%%%%%%%%%

\end{document}